\pdfminorversion=4
\documentclass{iopart}
\usepackage{iopams}
\usepackage{bbm}
\usepackage{enumerate}
\usepackage{hyperref}
\usepackage{color}
\usepackage{graphicx}

%
% theorem env and numbering
%
%\theorembodyfont{\normalfont}
\newtheorem{dfn}{Definition}

%\theoremsymbol{\ensuremath{\scriptstyle\blacksquare}}
\newtheorem{example}[dfn]{Example}
%\theoremstyle{nonumberplain}
%\theoremheaderfont{\itshape}

%\theoremclass{LaTeX}
\newtheorem{thm}[dfn]{Theorem}

\def\scaled{\let\onleft=\left\let\onright=\right}
\def\unscale{\let\onleft=\relax\let\onright=\relax}
\unscale
% Spaces
\newcommand{\hilbert}[1]{\ensuremath{\mathfrak{#1}}}

% Maps
\newcommand{\bmaps}{\ensuremath{\mathfrak B}}

\newcommand{\bbone}{\ensuremath{\mathbbm 1}}

% Operators
%\newcommand{\Tr}{Tr}

% Symbols

\newcommand{\defeq}{\ensuremath{:=}}
\newcommand{\setdef}{\ensuremath{\;\vert\;}}

\newcommand{\cmpl}{\ensuremath{'}}

\def\scaled{\let\onleft=\left\let\onright=\right}
\def\unscale{\let\onleft=\relax\let\onright=\relax}

\newcommand{\set}[1]{\ensuremath{\onleft\{ #1\onright\}}\unscale}

% Norm and absolute value symbols

\newcommand{\abs}[1]{\ensuremath{\onleft\vert #1 \onright\vert}}

% Sets
\newcommand{\complexes}{\ensuremath{\mathbb C}}
\newcommand{\reals}{\ensuremath{\mathbb R}}

% Relations

\newcommand{\borel}{\ensuremath{\mathrm{B}}}

\raggedbottom

\begin{document}

\title{Non-signaling boxes and quantum logics}
\author{T I Tylec and M Ku\'s}
\address{Center for Theoretical Physics,
    Polish Academy of Sciences,
    Aleja Lotnik\'ow 32/46,
    02-668 Warsaw, Poland}
\ead{tylec@cft.edu.pl}

\begin{abstract}
We analyze the structure of the so called non-signaling theories respecting
relativistic causality but allowing correlations violating bounds imposed by
quantum mechanics such as CHSH inequality. We discuss relations among such
theories, quantum mechanics, and classical physics. In particular we
reconstruct the probability theory adequate for the simplest instance of a
non-signaling theory, the two non-signaling boxes world, and exhibit its
differences in comparison with classical and quantum probabilities. We show
that the question whether such a theory can be treated as a kind of
``generalization'' of the quantum theory of the two-qubit system cannot be
answered positively. Some of its features put it closer to the quantum world,
for example measurements must be destructive, on the other hand the Heisenberg
uncertainty relations are not satisfied. Another interesting property
contrasting it from quantum mechanics is that the subset of ``classically
correlated states'', i.e.\ the states with only classical correlations, does
not reproduce the classical world of two two-state systems. Our results
establish a new link between quantum information theory and the well-developed
theory of quantum logics and can shed new light on the problem why quantum
mechanics is distinguished among non-signaling theories.
\end{abstract}

\pacs{02.10.-v, 03.67.-a}

%\submitto{\jpa}

\noindent{\it Keywords\/}: quantum logics, non-signaling theories,
generalized probability.

\maketitle

\section{Introduction}

The theory of so-called non-signaling boxes, also known as Box World theory,
became recently popular in the quantum information theory\footnote{The
literature in this topic is extensive, let us only mention that the pioneering
work of Popescu and Rohrlich\cite{popescu1994quantum} has about 634 citations,
incl. 21 in Nature (according to Google Scholar).}. Application of the
non-signaling theories range from discussing fundamental properties of nature
\cite{popescu1994quantum,Pawlowski:2009aa} to proving security of ciphering
protocols \cite{PhysRevLett.97.120405} or finding bounds on communication
complexity \cite{PhysRevLett.96.250401}.

In this work we will focus on the simplest example of non-signaling boxes, a
system defined in Ref.~\cite{barrett2005nonlocal} (see
Fig.~\ref{fig:2-2-box-world}, cf.\ also Ref.~\cite{short2010strong}), composed
of two spatially separated boxes, each having one binary input (with values
denoted by $\set{x, y}$) and one binary output (with values denoted by $\set{0,
1}$). 
The model is supposed to describe the most elementary system composed of
two separated subsystems. We can think of inputs as observables that we choose
to measure and outputs as results of measurements.

\begin{figure}[h]
\begin{center}
    \label{fig:2-2-box-world}
    \includegraphics[width=10cm]{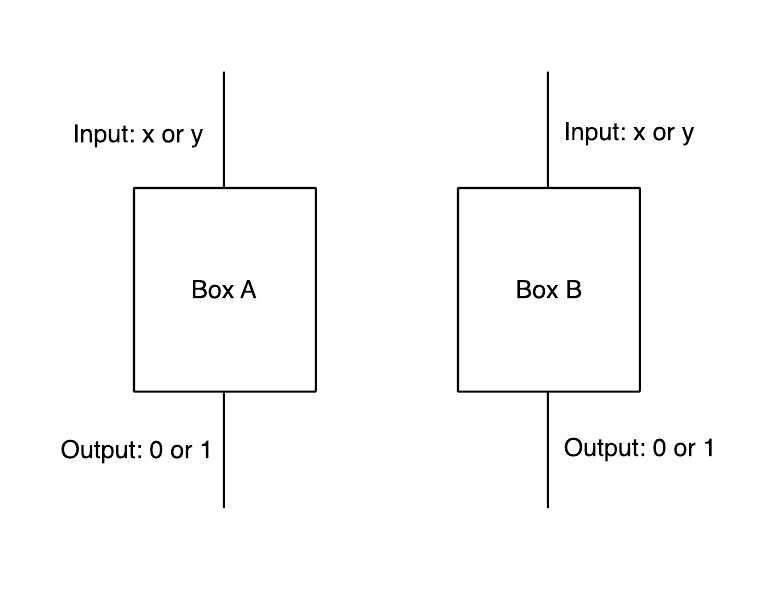}
    \caption{Schematic view of discussed example of non-signaling boxes.
        Output depends on input but in non-deterministic way.}
\end{center}
\end{figure}

Performing multiple measurements we will obtain a sequence of outcomes allowing
to determine relative frequencies $P(\alpha \beta | ab)$ of getting any pair of
outputs $\alpha \beta \in \set{0, 1}\times \set{0, 1}$, given any pair of
inputs $ab \in \set{x, y} \times \set{x, y}$ that can be explained when $P$ is
treated as a probability distribution determined by the actual state of the
system.

Let us thus define a state of a system as $P\colon\set{0,
1}\times\set{0,1}\times\set{x,y}\times\set{x, y}\rightarrow\mathbb{R}$
fulfilling
\begin{enumerate}
            \item[P1] \ $0 \le P(\alpha \beta| ab) \le 1$ (positivity)
            \item[P2] \ $\sum_{\alpha \beta} P(\alpha\beta | ab) = 1$
                (normalization)
            \item[P3] \ $\sum_\alpha P(\alpha \beta | ab) = \sum_\alpha
                P(\alpha \beta | c b)$ and $\sum_\beta P(\alpha \beta | ab)
                = \sum_\beta P(\alpha \beta | a c)$ (non-signaling
                condition),
        \end{enumerate}
where we always sum over the whole domain and all unbound variables are
universally quantified. The last property, non-signaling, is supposed to encode
the principle of relativistic causality, i.e.\ ``what happens in one box does
not influence the other'' \cite{popescu1994quantum} obeyed by spatially
separated subsystems. In the sequel, we will refer to this particular example
of non-signaling boxes as $(2, 2)$-box world.

Our aim is to investigate whether we can actually call $P$ \emph{a
probability}. If yes we can denote by:
\begin{equation}\label{eq:correlation}
    \langle ab \rangle =
    \sum_{\alpha,\beta\in\set{0,1}} (-1)^{\alpha\beta} P(\alpha\beta|ab),
\end{equation}
a quantity that we can interpret as the mean value of the ``observable
$(1-2a)(1-2b)$''. It is called in literature \emph{a correlation}. Then the
following CHSH \cite{clauser1969proposed}-like inequality holds,
\begin{equation}\label{eq:chsh}
    \abs{\langle xx \rangle + \langle xy \rangle + \langle yx \rangle - \langle
    yy \rangle} \le 4,
\end{equation}
since the absolute value of each term on the left hand side does not exceed 1.
In particular for the state described by
\begin{equation}
    P(\alpha\beta|ab) = \bordermatrix{
        \mathrm{} &  xx &  xy &  yx &  yy \cr
             00 & 1/2 & 1/2 & 1/2 &   0 \cr
             01 &   0 &   0 &   0 & 1/2\cr
             10 &   0 & 0   & 0   & 1/2\cr
             11 & 1/2 & 1/2 & 1/2 &   0
    },
\end{equation}
the maximal value of $4$ is obtained. We recall, that if in \eref{eq:chsh} one
substitute for $\langle ab \rangle$ the expectation value given by quantum
mechanics (i.e.\ $\Tr \rho ab$ for some state $\rho$), then the right hand side
of the inequality is $2\sqrt2$ (we will refer to this number as the Tsirelson's
bound). When $\langle ab \rangle$ is an expectation value in classical physics
(i.e.\ $\int_\Gamma a(x) b(x) p(x) dx$ for some probability measure $p(x)dx$ on
the phase space $\Gamma$) then the upper bound of the left hand side of
\eref{eq:chsh} equals $2$\footnote{Although we mention CHSH-like inequalities
(and, in particular, violation of Tsirelson bound) several times, being
acquainted with them is not essential for understanding this paper. Thus for
sake of compactness, we do not provide introduction to this topic. Reader
unfamiliar with this notion can easily find many references. Among multitude of
choices we recommend very interesting book of Streater\cite{streater2007lost},
or shorter paper by Griffiths\cite{griffiths2011quantum}.}. The fact that the
left hand side of \eref{eq:chsh} can be greater than $2\sqrt2$ led to the
conclusion \cite{popescu1994quantum, barrett2005nonlocal} that there are
theories respecting relativistic causality, but still exhibiting ``stronger
correlations'' than quantum mechanics\footnote{It is worth to point out that
the problem of violating $2\sqrt2$ bound in \eref{eq:chsh} was discussed in the
algebraic setting by Landau in
\cite{landau1992experimental,landau1987experimental} already in 1992. In a
very general setting of Segal's axiomatization \cite{segal1947postulates},
Landau showed that the violation of Tsirelson bound implies non-distributivity
of the algebra of observables. The only known examples of non-distributive
Segal's algebras \cite{sherman1956segal} are not suitable for description of
two causally separated systems. Thus, it seems that in the algebraic setting,
the relativistic causality is enough to single out quantum mechanics among
other reasonable theories. The only possible exception could be provided by
some yet unknown non-distributive Segal algebras.}.

In classical and quantum physics the interpretation of a state in terms of
probabilities is well established and does not pose any conceptual problems.
In the case of non-signaling boxes, however, a more careful analysis is needed.
Namely, a state $P$ is in fact only a function that satisfies P1--P3. The only
reason to call it probability is motivated by a \emph{thought experiment} in
which one imagine that having such boxes, performing measurements and writing
down the results one will obtain sequence of outcomes that can be explained
when $P$ is treated as a probability distribution. But there are no physical
non-signaling boxes and in our opinion interpretation of mathematical objects
used in the theory cannot be based on thought experiments. If the
interpretation of $P$ as a probability is not justified then so is drawing any
conclusions from \eref{eq:chsh}. The meaning of non-signaling is not clear
either.

In the following, exploiting the framework of quantum logics (see e.g.\
Ref.~\cite{ptak1991orthomodular}) we were able to show by construction that the
logic of propositions in $(2, 2)$-box world is an orthomodular poset (all
relevant definitions are given in the next section). This is sufficient to
interpret $P$ as a generalized {probability} \cite{pulmanova2007}. Moreover, by
showing that propositions that can be identified with propositions about only
one of the boxes are pairwise compatible we prove that (2,2)-box world can
indeed describe spatially separated subsystems 
(it seems that in general the non-signalling condition is weaker restriction
than the compatibility).

The paper is organized in the following way. For Reader's convenience, we begin
with a short introduction to the framework of quantum logics in
Section~\ref{sec:prob}. In Section~\ref{sec:logics-of-bwo} we construct the
logic of (2,2)-box world and analyze its basic structural properties.

In sections \ref{sec:box-world-one} and \ref{sec:restr-class-case} we compare
the obtained probability theory with classical and quantum one. We make
interesting observation that although from the mathematical point of view a
quantum logic is a more general structure than an orthomodular lattice (the
structure that is behind quantum probability) or Boolean algebra (that
describes classical probability), the simple statement that the theory of
non-signaling boxes is more general than quantum mechanics is not justified.
This stems from the fact, that the logic of even the simplest quantum model (a
qubit), has infinite number of propositions, contrary to 82 propositions of
$(2, 2)$-box world. From that point of view the logic of $(2, 2)$-box world is
much more similar to the logic of an analogous classical system. In fact, we
show in Section~\ref{sec:restr-class-case} that $(2, 2)$-box world logic can be
represented by wisely chosen subsets of the phase space of a classical two box
system. This has profound consequences. In particular, an analog of Heisenberg
uncertainty relations does not hold in $(2, 2)$-box world. 

In parallel, we study properties of $(2, 2)$-box world with the set of states
restricted to the so-called ``classically correlated boxes''. We show that the
resulting logical structure, contrary to the intuition, is exactly the same as
the logical structure of unrestricted $(2, 2)$-box world. Consequently,
``classically correlated boxes'' are not ``classical''.

In Section \ref{sec:embedding-box-world} we show how the so-called
``non-local'' states are responsible for peculiarities of (2,2)-box world
logic. Final remarks can be found in Section~\ref{sec:conclusions}.

All computations where performed using Wolfram Mathematica. For a reference the
notebook can be found in a \verb$git$ repository: see Ref.~\cite{gitrepo}. We
emphasize that we rely on the exact symbolic methods provided by Mathematica.
Presented results \emph{are not} based on any numerical calculations. Finally,
whenever we use ``clearly'', ``easily follows'', etc.\ we mean that the fact is
either obvious or can be easily verified using a computer, i.e.\ involves a lot
of combinatorics or exhaustive check over some large, but finite set.

\section{Framework of quantum logics}
\label{sec:prob}

For reference, we provide here a very concise introduction to quantum logics.
We remind only these definitions and facts that will be used in the sequel.
Interested reader can find much more detailed exposition in
\cite{ptak1991orthomodular} or in \cite{pulmanova2007}.

%Additionally, the latter reference advocates the use of quantum logics as a
%framework for general probability theories.

The primary object of our interest will be a partially ordered set (poset),
i.e.\ a set equipped with a reflexive, antisymmetric and transitive relation.

\begin{dfn}[see Def.~1.1.1 in \cite{ptak1991orthomodular}]
    A \emph{quantum logic} is a partially ordered set $\mathcal L$
    with a map $\cmpl\colon\mathcal L\to \mathcal L$ such that
    \begin{enumerate}
        \item[L1] there exists the greatest (denoted by $\bbone$)
            and the least (denoted by $0$) element in $\mathcal L$,
        \item[L2] map $a \mapsto a\cmpl$ is order reversing, i.e.\
            $a \le b$ implies that $b\cmpl \le a\cmpl$,
        \item[L3] map $a \mapsto a\cmpl$ is idempotent, i.e.\
            $(a\cmpl)\cmpl = a$,
        \item[L4] for a countable family $\set{a_i}$, s.t.\ $a_i \le a_j\cmpl$
            for $i\neq j$, the supremum $\bigvee \set{a_i}$ exists,
        \item[L5] if $a\le b$ then $a\vee(b\wedge a\cmpl)$ exists and
            $b = a\vee(b\wedge a\cmpl)$ (orthomodular law),
    \end{enumerate}
    where
    $a \vee b$ is the least upper bound
    and $a \wedge b$ the greatest lower bound of $a$ and $b$.
    \label{thm:qlogic}
\end{dfn}

\begin{dfn}[see Def.~2.4.1 in \cite{ptak1991orthomodular}]
An element $a\in\mathcal{L}$ is called an \emph{atom} if $a\ne 0$ and for
every $b\le a$ either $b= 0$ or $b =a$. 
A logic $\mathcal{L}$ is called \emph{atomic}
if for every $b\in\mathcal{L}$ there exist an atom 
$a\in\mathcal{L}$ such that $a\le b$. 
A logic $\mathcal{L}$ is called \emph{atomistic}
whenever every $b\in\mathcal{L}$ can be written
as supremum of all atoms that are less or equal to $b$.
\end{dfn}

\begin{dfn}[see Def.~2.1.1 in \cite{ptak1991orthomodular}]
    Let $\mathcal L$ be a quantum logic.
    Function $\mu\colon\mathcal L \to [0,1]$ is called
    a \emph{probability measure} or \emph{state} on $\mathcal L$
    if and only if
    \begin{enumerate}[(i)]
        \item $\mu(0) = 0, \mu(\bbone) = 1$,
        \item $\mu(a_1\vee a_2\vee\dots) = \sum_{k=1}^\infty \mu(a_k)$,
            whenever $a_i \le a_j\cmpl$ for $i\neq j$.
    \end{enumerate}
    \label{thm:state}
\end{dfn}

\begin{dfn}[see Def.~4.1.1 in \cite{ptak1991orthomodular}]
    \label{thm:observables} Let $\mathcal L$ be a quantum logic. Function
    $X\colon \borel(\reals)\to\mathcal L$, where $\borel(\reals)$ is a family
    of Borel sets on $\reals$, is called an \emph{$\mathcal L$-valued
        measure} on $B(\reals)$ or an \emph{observable} if
    \begin{enumerate}[(i)]
        \item $X(\reals) = 1$,
        \item $X(\reals\setminus A) = X(A)\cmpl$,
        \item $X(A_1\cup A_2\dots) = X(A_1)\vee X(A_2)\vee \dots$ for any
            countable family $\set{A_i}\subset\borel(\reals)$ of pairwise
            disjoint sets.
    \end{enumerate}
\end{dfn}

\begin{example}
    Let $\hilbert H$ be a separable Hilbert space. Denote by $\mathcal
    P(\hilbert H)$ the set of orthogonal projections on $\hilbert H$, ordered
    by the subspace inclusion. Define $P\cmpl = \bbone - P$, where $\bbone$ is the
    identity operator. Then $\mathcal P(\hilbert H)$ is a quantum logic.
    Observe that for any pair $P, Q\in\mathcal P(\hilbert H)$, $P\vee Q$ and
    $P\wedge Q$ exist. A quantum logic with such property is called an
    \emph{orthomodular lattice}.

    $\mathcal L$-valued measures correspond via the
    spectral theorem to the self-adjoint operators acting
    on the Hilbert space $\hilbert H$.

    Similarly, probability measure on $\mathcal P(\hilbert H)$
    defines a quantum mechanical state described by a density matrix.
    When the measure is supported by only one rank-1 projector,
    then the corresponding state is pure.
\end{example}

\begin{example}
    Let $\Gamma$ be a phase space of some classical model
    with some measure $\nu$.
    Denote by $\mathcal B(\Gamma)$ a family
    of $\nu$-measurable subsets of $\Gamma$,
    ordered by set inclusion.
    For any $A\in\mathcal B(\Gamma)$ we set
    $A\cmpl = \Gamma\setminus A$.
    Then $\mathcal B(\Gamma)$ is a quantum logic.
    Since $A\vee B = A\cup B$ and $A\wedge B = A\cap B$,
    it is also an orthomodular lattice.
    Moreover, for any triple $A, B, C\in\mathcal B(\Gamma)$
    distributivity law is satisfied:
    \begin{equation*}
        A\vee(B\wedge C) = (A\vee B)\wedge(A\vee C).
    \end{equation*}
    Orthomodular lattice for which the distributivity law holds
    is called a \emph{Boolean algebra}.
\end{example}

Strong connection between the structure of Boolean algebra and the Kolmogorov's
axiomatization of probability motivates the use of an orthomodular lattice or
an orthomodular poset as an axiomatic definition of generalized probability
\cite{pulmanova2007}.
%(we would like to remark,
%that this approach more general than the one
%via non-commutative von Neumann algebras).
Elements of a quantum logic $\mathcal L$
are interpreted as \emph{events}.
In physical terms, these correspond to two-valued measurements, called \emph{propositions}.
Outcomes of a proposition can be conveniently labeled as ``yes'' and ``no''.
For a given state $\mu$ on $\mathcal L$ and $a\in\mathcal L$
value $\mu(a)$ is interpreted as a probability of getting answer ``yes''
for proposition $a$.
If $a\le b\cmpl$ we say that $a$ and $b$ are \emph{disjoint}.

Quantum mechanics taught us that it is important to distinguish subsets of
propositions that can be described by a classical probability model. This
motivates the following definition:

\begin{dfn}[see Def. 1.2.1 in \cite{ptak1991orthomodular}]
    Let $\mathcal L$ be a quantum logic.
    Elements $a, b\in\mathcal L$ are said to be \emph{compatible},
    what will be denoted by $a \leftrightarrow b$,
    whenever there exist mutually disjoint propositions $a_1, b_1, c$
    such that: $a = a_1 \vee c$, $b = b_1 \vee c$.
\end{dfn}

\begin{dfn}[see Def. 1.2.2 in \cite{ptak1991orthomodular}]
    A subset $\mathcal K \subset \mathcal L$ of quantum logic $\mathcal L$
    is called a \emph{sublogic} of $\mathcal L$ whenever:
    \begin{enumerate}[(i)]
        \item $0\in\mathcal K$,
        \item $a\in\mathcal K$ implies $a\cmpl\in\mathcal K$,
        \item for a countable family $\set{a_i}\subset K$ of
            pairwise disjoint elements,
            supremum $\bigvee \set{a_i}$ taken in $\mathcal L$
            belongs to $\mathcal K$.
    \end{enumerate}
    If $\mathcal K$ is a maximal Boolean sublogic of $\mathcal L$,
    then it is called a \emph{block}.
\end{dfn}

A compatible pair, $a, b$, of propositions from $\mathcal L$ can be
simultaneously measured, i.e.\ there exists a Boolean sublogic of $\mathcal L$
that contain $a$ and $b$ (Thm.~1.3.23 of Ref.~\cite{ptak1991orthomodular}). But
it is important to remember that, in general, if $A \subset \mathcal L$ is a
subset of orthomodular poset $\mathcal L$ such that all elements of $A$ are
pairwise compatible, then there might not exist any Boolean sublogic $\mathcal
B\subset\mathcal L$ such that $A\subset \mathcal B$. For that we need a
stronger definition of compatibility of sets. In this paper we will not
consider anything more than a compatible pair, so for sake of brevity we will
omit details (see Sec.~1.3 of Ref.~\cite{ptak1991orthomodular}).

Following class of quantum logics will be of profound importance for us:
\begin{dfn}[cf.\ Sec. 1.1 in \cite{ptak1991orthomodular}]
    Let $\Omega$ be a set and $\Delta$ be a family of subsets of $\Omega$ such
    that:
    \begin{enumerate}[(i)]
        \item $\emptyset\in\Delta$,
        \item if $A\in\Delta$ then $\Omega\setminus A\in\Delta$,
        \item for any countable family $\set{A_i}_{i\in I}\subset \Delta$ of
            pairwise disjoint sets $\bigcup_{i\in I} A_i \in \Delta$.
    \end{enumerate}
    We say that $(\Omega, \Delta)$ is a \emph{concrete logic} with partial order
    induced by set inclusion.
    \label{thm:concrete-logic}
\end{dfn}

\begin{example}
    Let $\Omega = \set{1,\dots,2n}$
    and $\Delta$ be a family of subsets of $\Omega$
    with even number of elements.
    Then $(\Omega, \Delta)$ is a concrete logic
    which is Boolean algebra for $n=1$,
    orthomodular lattice for $n=2$
    and quantum logic for $n\ge3$.
\end{example}

In order to determine whether a logic is concrete or not, we need few more
notions and results, namely:

\begin{dfn}[see Def.\ 44 in \cite{pulmanova2007}]
    Let $\mathcal L$ be a quantum logic.
    The set of states $\mathcal S$ is said to
    be \emph{rich} if for any disjoint pair of propositions $p,q$ there exists a
    state $\sigma\in\mathcal S$ such that $\sigma(p) = 1$ and $\sigma(q) > 0$.

    We say that $\mathcal L$ is \emph{rich} whenever it has a rich subset of states.
    We say that $\mathcal L$ is \emph{2-rich} whenever it has a rich set of two-valued
    states (i.e.\ states with a property that 
    $\forall q\in L, \sigma(q) = 1$ or $\sigma(q)=0$).
    \label{thm:rich}
\end{dfn}

\begin{thm}[see Thm.\ 48 in \cite{pulmanova2007}]
    \label{thm:set-repr}
    A quantum logic $\mathcal L$ is set-representable,
    i.e.\ there exists order preserving isomorphism between $\mathcal L$ and some concrete
    logic $(\Omega, \Delta)$ if and only if $\mathcal L$ is 2-rich.
\end{thm}

Concrete logics exhibit some classical properties. Let us firstly summon a
reformulation of Heisenberg uncertainty relation in the language of quantum
logics. Following Ref.~\cite{pulmanova2007}, let $X$ be a real observable on
the quantum logic $\mathcal L$. Then for any state $\mu$ on $\mathcal L$ we
define an expected value of $X$ as
\begin{equation*}
    \mu(X) \defeq \int_\reals t \mu(X(dt)),
\end{equation*}
whenever the integral exists.
Similarly we define a variance of $X$ in state $\mu$
\begin{equation*}
    \Delta_\mu X \defeq \int_\reals (t-\mu(X))^2 \mu(X(dt)).
\end{equation*}
Then for a given pair of real observables $X, Y$ either
\begin{equation}
\fl    \forall \varepsilon>0\,
    \exists \hbox{ a state }\mu \hbox{ with finite variance for } X \hbox{ and } Y,\qquad
    (\Delta_\mu X)(\Delta_\mu Y) < \varepsilon
    \label{eq:heisenber-fail}
\end{equation}
or
\begin{equation}
\fl    \exists \varepsilon>0\,
    \forall \hbox{ states } \mu \hbox{ with finite variance for } X \hbox{ and } Y,\qquad
    (\Delta_\mu X)(\Delta_\mu Y) \ge \varepsilon.
    \label{heisenberg}
\end{equation}
In the former case we say that \emph{Heisenberg uncertainty relations are not
satisfied} while in the latter one we say that \emph{Heisenberg uncertainty
relations are satisfied}. We have:

\begin{thm}[cf.\ Thm.~50 and Thm.~129 in \cite{pulmanova2007}]
    If $\mathcal L$ is a concrete logic,
    then Heisenberg uncertainty relations are not satisfied.
    \label{thm:heisenberg}
\end{thm}

\section{Propositional system of the Box World}
\label{sec:logics-of-bwo}

In this section we construct the propositional system of $(2, 2)$-box world
described in the Introduction (cf.\
Ref.~\cite{barrett2005nonlocal,short2010strong}). To avoid confusion with
notions introduced in the previous section, a $(2, 2)$-box world state in the
sequel will be called \emph{PR-box state} and the notion of \emph{state} will
be reserved for a measure on an orthomodular poset. The construction is a
rather standard procedure, involving arguments similar to the ones used by
Mackey in his axiomatic approach to quantum mechanics \cite{mackey1963mathematical}
(we remark that the framework we use is more general than either
Mackey's axioms or Piron's axioms, \cite{piron1976foundations}). We emphasize
that the procedure does not add anything new to the original definition of
the (2,2)-box world. We only explore the structure that it already has.

We start by making few observations. Firstly, the logic $\mathcal L$ of
the (2,2)-box world, if it exists, must contain propositions corresponding to the
most elementary questions in $(2, 2)$-box world, i.e.\ questions of the form
``does measuring $a$ on the first subsystem and $b$ on the second yields the
result $\alpha$ on the first subsystem and $\beta$ on the second''. We will
denote the corresponding proposition by $[ab, \alpha \beta]$.

Secondly, any PR-box state on (2,2)-box world should correspond to some state
on $\mathcal L$. Consequently, if $P$ is a PR-box state then the state $\rho_P$
should satisfy
\begin{equation*}
    \rho_P\big([ab, \alpha \beta]\big) = P(\alpha \beta| ab).
\end{equation*}
By the definition and properties of the Box World theory
(cf.\ Thm~2 in Ref.~\cite{short2010strong}),
propositions from the set
\begin{equation}
	\label{eq:atoms}
        \mathcal A = \set{[ab,\alpha \beta]\setdef a,b\in\set{x,y}, \alpha,\beta\in\set{0,1}}
\end{equation}
are sufficient to describe completely any measurement in (2,2)-box world, so
any other propositions must be build from the elements of $\mathcal A$.
Moreover, $\mathcal L$ must contain two trivial propositions, which we denote
by $\bbone$ (trivial ``yes'') and $0$ (trivial ``no'').

Given a question corresponding to the proposition $[ab, \alpha \beta]$ we can
always ask the negated question 
(i.e.\ interchange answer ``yes'' with ``no''). 
We will denote
the proposition described by such a question by $[ab, \alpha \beta]\cmpl$.
Because we expect that the set of the PR-box states is rich enough to
distinguish different observables and determine their ordering, we can formally
define $[ab, \alpha \beta]\cmpl$ by
\begin{equation}
    \rho_P([ab,\alpha\beta]\cmpl) = 1 - \rho_P([ab,\alpha\beta])\qquad
    \forall \hbox{ PR-box states } P,
\end{equation}
and $r \le q$ if and only if
\begin{equation}
    \rho_P(r) \le \rho_P(q),\qquad\forall \hbox{ PR-box states } \rho.
\end{equation}
If for some fixed pair of propositions $r \le q\cmpl$,
then
\begin{equation}
    \rho_P(p) + \rho_P(q) \le 1,\qquad
    \forall \hbox{ PR-box states } P,
    \label{eq:sum}
\end{equation}
and so $r$ and $q$ cannot both be true.
Thus in principle, the question: ``does $r$ or $q$ is true?''
should make sense.
We denote the proposition corresponding to such question
by $r\oplus q$ and define it by
\begin{equation}
	\rho_P(p\oplus q) = \rho_P(p) + \rho_P(q).
\end{equation}
We could proceed without this assumption,
but from the operational point of view it is well justified
and this seems to be in line with the general idea of non-signaling theories.
Moreover, this construction is essential to identify propositions about a single subsystem
and can be used to express complementary propositions, e.g.
\begin{equation*}
    [xx, 00]\cmpl = [xx,01] \oplus [xx,10] \oplus [xx,11].
\end{equation*}
By construction, the operation $\oplus$ is commutative and associative.

In order to generate all possible propositions in (2,2)-box world, we define
recursively a sequence of sets:
\begin{eqnarray*}
    \mathcal L_0 &=\mathcal A\cup \set{0,\bbone}, \\
    \mathcal L_{i+1} &= \mathcal L_i\cup\mathcal L_{i+1}^c\cup \mathcal L_{i+1}^p
\end{eqnarray*}
where,
\begin{eqnarray*}
\mathcal L_{i+1}^c &= \set{q\cmpl; q\in\mathcal L_i}, \\
\mathcal L_{i+1}^p &= \set{r\oplus q; r,q\in\mathcal L_i,\hbox{ s.t. } \forall
      \hbox{ PR-box states }P\,\,\, \rho_P(r) + \rho_P(q)\leq 1}.
\end{eqnarray*}
Computing $\mathcal L_i^p$ requires maximizing the left hand side of
Eq.~(\ref{eq:sum}) with respect to PR-box states. If we represent a PR-box
state $P$ as a matrix (table):
\begin{equation}
	\rho_P = (\rho_{ij}) = \left(
	\begin{array}{cccc}
		P(00|xx) & P(00|xy) & P(00|yx) & P(00|yy) \\
		P(01|xx) & P(01|xy) & P(01|yx) & P(01|yy) \\
		P(10|xx) & P(10|xy) & P(10|yx) & P(10|yy) \\
		P(11|xx) & P(11|xy) & P(11|yx) & P(11|yy) \\
	\end{array}
	\right), \label{eq:state-m}
\end{equation}
then the properties P1, P2, P3 are just linear constraints on $\rho_{ij}$. The
non-signaling conditions become,
\begin{eqnarray}
	\label{eq:no-signal}
	%\begin{align}
		\rho_{11}+\rho_{31}&=&\rho_{13}+\rho_{33}, \qquad \rho_{21}+\rho_{41}=\rho_{23}+\rho_{43},\nonumber \\
		\rho_{12}+\rho_{32}&=&\rho_{14}+\rho_{34}, \qquad \rho_{22}+\rho_{42}=\rho_{24}+\rho_{44},\nonumber \\
		\rho_{11}+\rho_{21}&=&\rho_{12}+\rho_{22}, \qquad \rho_{31}+\rho_{41}=\rho_{32}+\rho_{42},\nonumber \\
		\rho_{13}+\rho_{23}&=&\rho_{14}+\rho_{24}, \qquad \rho_{33}+\rho_{43}=\rho_{34}+\rho_{44},
	%\end{align}
\end{eqnarray}
The normalization conditions are given by
\begin{eqnarray}
	\label{eq:normalization}
	%\begin{align}
		\rho_{11}+\rho_{21}+\rho_{31}+\rho_{41}&=&1,\nonumber \\
		\rho_{12}+\rho_{22}+\rho_{32}+\rho_{42}&=&1,\nonumber \\
		\rho_{13}+\rho_{23}+\rho_{33}+\rho_{43}&=&1,\nonumber \\
		\rho_{14}+\rho_{24}+\rho_{34}+\rho_{44}&=&1,
	%\end{align}
\end{eqnarray}
and positivity means that $\rho_{ij}\ge0$. Thus a test whether $r\oplus q$
exists is a linear programming problem that can be solved exactly. We used
Wolfram Mathematica \cite{mathdoc} for this purpose. Similarly, determining
whether $r\le q$ is also a linear programming problem, but now the objective
function is given by
\begin{equation*}
    \rho_P(r) - \rho_P(q) \le 0,
\end{equation*}
(in fact, $r\oplus q$ is defined whenever $r \le q\cmpl$,
so the former problem is a special case of the latter).
Let us also observe that due to the non-signaling constraints,
for some $r, q$ we have
\begin{equation*}
    \rho_P(r) = \rho_P(q),
\end{equation*}
i.e.\ $r \leq q$ and $q \leq r$.
We will write in that case that $r \sim q$
(e.g.\ $[xx, 00]\oplus [xx, 01] \sim [xy, 00]\oplus[xy, 01]$)

Finally, let us remark that for (2,2)-box world $\mathcal L_i = \mathcal
L_{i+1}$ for $i\ge 4$. It follows from the observation that subset of $\mathcal
A$ with pairwise orthogonal elements can have cardinality at most $4$. Then we
define $\mathcal L$ as a quotient of $\mathcal L_4$ with respect to the
equivalence relation $\sim$, i.e.\ $\mathcal L = \mathcal L_4/\sim$

The partially ordered set $\mathcal L$ has 82 elements
(cf.\ Table~\ref{tab:L} for equivalence class representatives).
Any further property of $\mathcal L$ can be obtained
by analyzing the directed graph representing partial ordering of $\mathcal L$
(see Figure~2 for its schematic representation).
Again, we used Mathematica to obtain all the results reported in the sequel.
We want to emphasize, that the results are exact,
as they follow from traversing the mentioned graph.
We sum up basic properties of $\mathcal L$ in the
following proposition:

\begin{thm}
    Let $\mathcal L$ be an above constructed
    set of propositions about $(2, 2)$-box world.
    Then
    \begin{enumerate}
        \item[(i)] $\mathcal L$ is an atomistic quantum logic,
        \item[(ii)] let
            \begin{eqnarray*}
                x_\alpha\bbone &=& [xx,\alpha0]\oplus[xx,\alpha1],
                \bbone x_\alpha = [xx,0 \alpha]\oplus[xx,1 \alpha],\\
                y_\alpha\bbone &=& [yx,\alpha0]\oplus[yx,\alpha1],
                \bbone y_\alpha = [xy,0 \alpha]\oplus[xy,1 \alpha],
            \end{eqnarray*}
            then all the pairs of the propositions
            $(x_\alpha \bbone, \bbone x_\beta)$,
            $(x_\alpha \bbone, \bbone y_\beta)$,
            $(y_\alpha \bbone, \bbone x_\beta)$,
            $(y_\alpha \bbone, \bbone y_\beta)$
            are compatible while pairs
            $(x_\alpha \bbone, y_\beta \bbone)$,
            $(\bbone x_\alpha, \bbone y_\beta)$
            are not,
        \item[(iii)] the set of all states on $\mathcal L$
            coincide with the set of all PR-box states on $\mathcal L$,
        \item[(iv)] the logic $\mathcal L$ is set-representable,
        \item[(v)] there are pairs of blocks of $\mathcal L$
            that have two atoms in common,
            (i.e.\ blocks do not form an almost disjoint system,
            see Def. 2.4.2 in \cite{ptak1991orthomodular}).
    \end{enumerate}
    \label{thm:main-struct}
\end{thm}
\textbf{Proof}
	To show (i) we directly check requirements of Def.~\ref{thm:qlogic} (we
    perform the exhaustive check \cite{gitrepo}). It is interesting to note,
    that $\mathcal L$ is not a lattice, as 32 pairs of propositions do not have
    unique least upper bound. For example, the minimal elements of the upper
    bound of $[xx, 00]$ and $[yy, 00]$ are $[xy, 11]\cmpl$ and $[yx, 11]\cmpl$.

    Similarly, (ii) can also be checked directly using the definition.
    For example, for $x_1 \bbone$, $\bbone x_1$ and
    \begin{equation*}
        a = [xx, 10], \quad b = [xx, 01], \quad c = [xx, 11],
    \end{equation*}
    we have that $x_1\bbone = a \vee c, \bbone x_1 = b \vee c$, and all $a, b,
    c$ are mutually disjoint. To observe non-compatibility of the remaining
    pairs, e.g.\ $(x_1\bbone, y_1\bbone)$ it is enough to observe that
    $(x_1\bbone)\cmpl = x_0\bbone$ and
    \begin{equation*}
        x_1\bbone \vee (x_0\bbone\wedge y_1\bbone) \neq
        (x_1\bbone \vee x_0\bbone)\wedge(x_1\bbone \vee y_1\bbone),
    \end{equation*}
    thus the distributivity law does not hold and, consequently,
    $x_1\bbone, y_1\bbone$ do not span a Boolean algebra.

    To show (iii) we use the fact that $\mathcal L$ is, by
    construction, an atomistic logic and any state is determined by
    the value on its atoms. The set of atoms of $\mathcal L$ is
    exactly the set $\mathcal A$, so we have obvious mapping between
    states on $\mathcal L$ and PR-box states on the $(2, 2)$-box world. 
    We need to show that the restrictions imposed on states by the order
    structure on $\mathcal L$ (see Def.~\ref{thm:state}) are not
    weaker than P1--P3. Again, by an exhaustive check we observe that
    the former and the latter linearly dependent on each other. Note,
    that this is not a trivial property, because as we observe in
    Sec.~\ref{sec:restr-class-case}, we obtain the same order
    structure on $\mathcal L$ with much restricted set of PR-box
    states. 
   
    For (iv) we show again by the exhaustive check of all $2^{16}$
    matrices with binary entries that the set of states on $\mathcal L$
    is 2-rich (there are exactly 16 two-valued states which are, in fact,
    extreme states of the so-called classically correlated boxes,
    discussed in more details in Sec.~\ref{sec:restr-class-case}).
    Consequently, by Thm.~\ref{thm:set-repr}, $\mathcal L$ is set-representable.
    Concrete representation will be given in Sec.~\ref{sec:restr-class-case}.

    Finally, for (v) we can easily compute all blocks of $\mathcal L$.
    They are spanned by the following subsets of atoms of $\mathcal L$:
    \begin{eqnarray*}
        \fl \set{[xx, 00], [xx, 01], [xx, 10], [xx, 11]}, \quad
            \set{[xy, 00], [xy, 01], [xx, 10], [xx, 11]},\\
        \fl \set{[xx, 00], [xx, 01], [xy, 10], [xy, 11]}, \quad
            \set{[xy, 00], [xy, 01], [xy, 10], [xy, 11]},\\
        \fl \set{[yx, 00], [xx, 01], [yx, 10], [xx, 11]}, \quad
            \set{[xx, 00], [yx, 01], [yx, 11], [xx, 10]}, \\
        \fl \set{[yx, 00], [yx, 01], [yx, 10], [yx, 11]}, \quad
            \set{[yy, 00], [yy, 01], [yx, 10], [yx, 11]}, \\
        \fl \set{[yy, 00], [xy, 01], [yy, 10], [xy, 11]}, \quad
            \set{[xy, 00], [yy, 01], [yy, 11], [xy, 10]}, \\
        \fl \set{[yx, 00], [yx, 01], [yy, 10], [yy, 11]}, \quad
            \set{[yy, 00], [yy, 01], [yy, 10], [yy, 11]}.
    \end{eqnarray*}
    This last property has an important consequence:
    quantum logic $\mathcal L$ of (2,2)-box world cannot be
    represented by a Greechie diagram
    (cf.\ \cite{ptak1991orthomodular}, Sec. 2.4).
$\blacksquare$
\medskip

Thus we see that it is actually justified to call a function $P$ from the
definition of the $(2,2)$-box world \emph{a probability}, but we should expect that
it will have some properties that are not possessed by quantum probability.
Clearly, Eq.~(\ref{eq:chsh}) express one of them.

Moreover, propositions that we can identify with propositions about the left or
right box, i.e.\ $x_\alpha \bbone, \bbone x_\alpha$, etc., are compatible, thus
simultaneously measurable what is required by the principle of relativistic
causality. Note that it seems that compatibility is, in general, more
restrictive than non-signaling, thus the problem of compatibility in
non-signaling boxes with higher number of inputs and outputs should be
addressed in the future.

An interesting property (iii) will be further discussed in the light of results
presented in Sec.~\ref{sec:restr-class-case}. 
Moreover, it allows us to drop the
distinction between states and PR-box states in the sequel.

\begin{figure}[h]
\label{fig:2-2-box-lattice}
    \includegraphics[width=15cm,height=5cm]{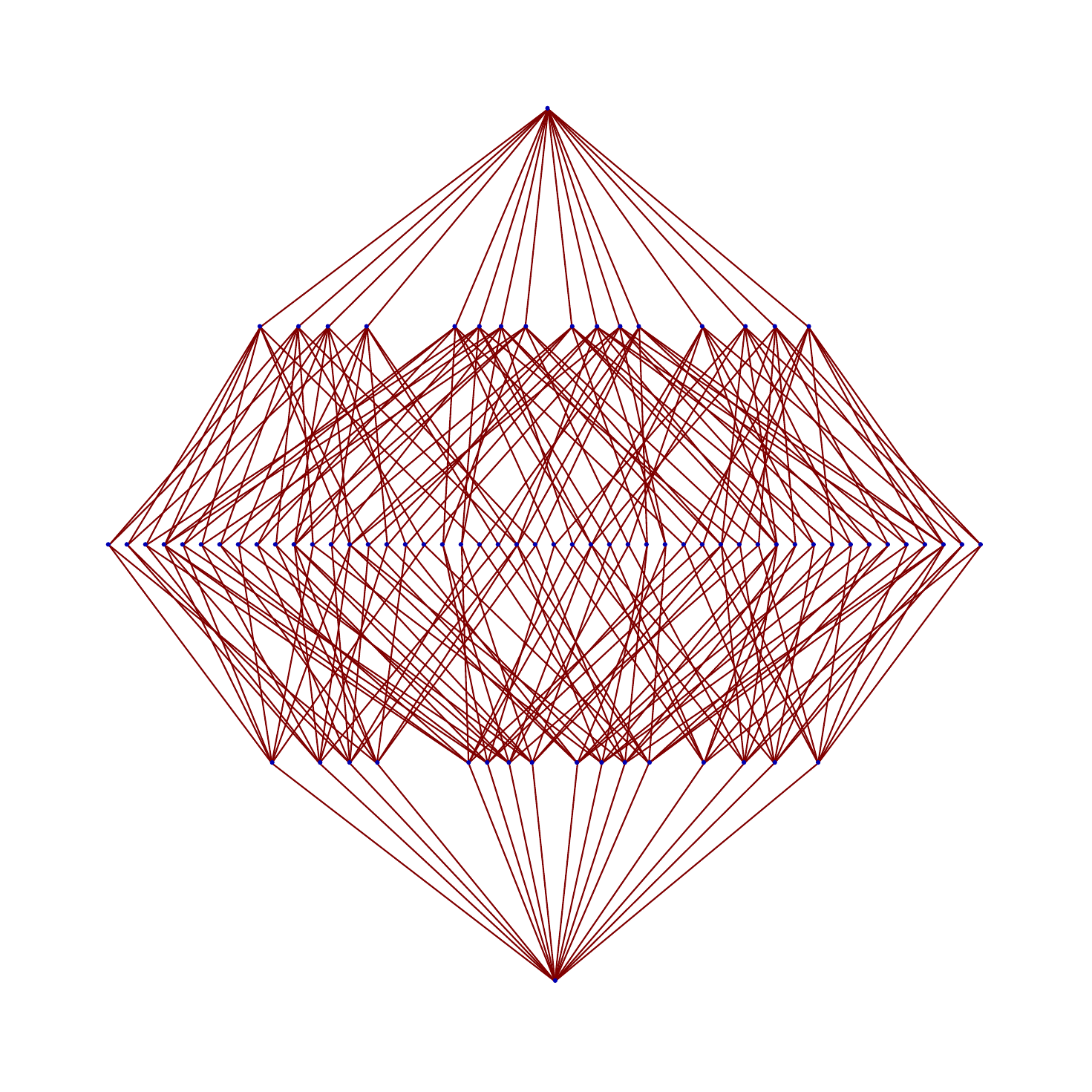}
    \caption{Hasse diagram of (2,2)-box world logic.
        Labels are omitted for compactness.
        There is no corresponding Greechie diagram, as blocks
        of (2,2)-box world logic are not almost disjoint.}
\end{figure}

\section{Properties of the (2,2)-box world}

In the previous section we constructed logic of the (2,2)-box world
and showed that it has a structure that is consistent with
na\"ive interpretation of non-signalling theories.
Here we will explore more deeply its properties.

\subsection{(2,2)-box world vs.\ two qubit system} \label{sec:box-world-one}

The most profound difference between the (2,2)-box world and
two qubit system stems from the fact that the former,
being a concrete logic, does not satisfy Heisenberg uncertainty relations
(see Thm.\ \ref{thm:heisenberg}).
So from that perspective, (2,2)-box world is more classical than two qubit system.

When it comes to the violation of Tsirelson bound, in the light of results
summarized in Thm.~\ref{thm:main-struct} it is clear that we can expect
qualitative difference, when the orthomodular lattice structure (the logic of
quantum models) is replaced by the more general structure of quantum logic. 
In the similar way the classical bound of 2 for CHSH-type inequalities is violated when
the Boolean algebra is replaced by a more general orthomodular lattice
structure.

Another essential difference between the (2,2)-box world and the two qubit
system lies in the fact that the logic of even the simplest quantum model has
infinite number of propositions, while (2,2)-box world has only 82 of them. In
particular, we cannot claim that the quantum mechanics is a special case of the
theory of non-signaling boxes. To see how the number of propositions is
important for the logical structure let us consider the following example.

Consider the two qubit system, described by $\hilbert H =
\complexes^2\otimes\complexes^2$. Let $P\otimes\bbone, Q\otimes\bbone\in
\bmaps(\hilbert H)$ be a pair of non-commuting projectors.
Analogously, choose two non-commuting projectors $\bbone\otimes R,
\bbone \otimes S$ representing measurement on the second subsystem. Now the join
of $P\otimes S$ and $Q\otimes R$ exists and by definition is equal to the projector
onto the smallest subspace that contains both images of $P\otimes S$ and
$Q\otimes R$. But this projector can not be expressed by a linear combination of
projectors that we used so far.

In other words, if we denote projector $P\otimes S$ by $[xy,11]$,
$(\bbone-P)\otimes S$ by $[xy,01]$, $Q\otimes R$ by $[yx,11]$, $Q\otimes
(\bbone-R)$ by $[yx,10]$, etc. then $[xy,11]\vee [yx,11]$ in quantum mechanics
exists, but it is not a linear combination of $[ab,\alpha\beta]$. On the other
hand, in the (2,2)-box world there is no equivalent of the above quantum mechanical
proposition. Consequently, the more general structure of the (2,2)-box world is a
result of a depleted set of propositions when compared to quantum mechanics. We
elaborate on this statement in the Sec.~\ref{sec:embedding-box-world}.

To sum up, despite the obvious fact that a quantum logic is a more general
object than an orthomodular lattice, simple statement that (2,2)-box world is a
generalization of some quantum model (e.g.\ two qubits) is not justified.

\subsection{Restriction to the classical case} \label{sec:restr-class-case}

We would like to analyze the logical structure of (2,2)-box world when we
restrict to states which are convex combinations of the following 16 PR-box
states \cite{barrett2005nonlocal}:
\begin{equation}
\fl        P_{mnlk}(\alpha \beta|ab) =
	\left\{ \begin{array}{ll}
		1 & \mathrm{if\ } \alpha = m a + n \mathrm{\ mod\ 2,\ } \beta = l b + k \mathrm{\ mod\ }2,\\
		0 & \mathrm{otherwise},
	\end{array}\right.
        \label{eq:ccboxes}
\end{equation}
where $m,l,n,k\in\set{0, 1}$ and intput values $x, y$ are treated as $0, 1$ respectively.
These states are
called \emph{classically correlated boxes} or \emph{local boxes}
\cite{barrett2005nonlocal}. We repeat the construction of the logic from the
Sec.~\ref{sec:logics-of-bwo} but with respect to this restricted set of states.
It is quite amazing that we get exactly the same logic $\mathcal L$.
Consequently, these states are not ``classical'' in the sense of forming a
closed set to which we can apply classical probability rules, but indeed only
``classically correlated'', i.e.\ chosen in such way that they do not violate
CHSH-like inequalities. Let us thus examine the relation of the ``classically
correlated'' boxes to the truly classical boxes, i.e.\ boxes implemented by
classical system.

In classical physics a system is described by its phase space. For system
that can be fully described by two dichotomic observables the phase space is a
set of four points. The phase space of a compound system is a
Cartesian product of phase spaces of the components, thus the phase space of
truly the classical (2,2)-box world is a 16 element set. We can think of it as
\begin{equation*}
	\Gamma = \set{(a,b,c,d)\setdef a,b,c,d\in\set{0,1}},
\end{equation*}
where each 4-tuple represents the values of four dichotomic observables in that
point of $\Gamma$ (i.e.\ a pure state). We assume that $a,b$ are the values of
$x$ and $y$, respectively on the first subsystem and $c,d$ are values of $x$
and $y$ on the second subsystem.

Any probability measure on this set can be represented by the point within the
15-simplex:
\begin{equation*}
	(p_1, p_2, \dots, p_{15}),\quad \hbox{ where } p_i\ge0, \sum_i p_i \leq 1.
\end{equation*}
It is important to note that any point of a simplex can be uniquely represented
as a convex combination of extreme points. In other words a mixed state
``remembers'' how it was made. This constitutes a remarkable property of
classical theories.

To any proposition of type $[xy,\alpha \beta]$ there corresponds a subset
$E([xy,\alpha \beta])$ of $\Gamma$ defined in the following way:
\begin{equation}
\fl    E([z_1 z_2,\alpha \beta]) \defeq
	\left\{
    \begin{array}{ll}
        \set{(\alpha, b, \beta, d)\in \Gamma\setdef b,d \in\set{0,1}}& \mathrm{ for\ } z_1 = x, z_2 = x\\
        \set{(\alpha, b, c, \beta)\in \Gamma\setdef b,c \in\set{0,1}}& \mathrm{ for\ } z_1 = x, z_2 = y\\
        \set{(a, \alpha, \beta, d)\in \Gamma\setdef a,d \in\set{0,1}}& \mathrm{ for\ } z_1 = y, z_2 = x\\
        \set{(a, \alpha, c, \beta)\in \Gamma\setdef a,c \in\set{0,1}}& \mathrm{ for\ } z_1 = y, z_2 = y
    \end{array}
    \right.
    \label{eq:quest-phase}
\end{equation}
We can define a mapping $\varphi$ from the simplex of classical states on
$\Gamma$ into the set of classically correlated boxes on (2,2)-box world. Using
the matrix representation of (2,2)-box world states (\ref{eq:state-m}) one
expresses it explicitly
\begin{equation}
\fl	\varphi(\mu) = \rho_\mu = \left(
	\begin{array}{cccc}
		\mu(E[xx,00]) & \mu(E[xy,00]) & \mu(E[yx,00]) & \mu(E[yy,00]) \\
		\mu(E[xx,01]) & \mu(E[xy,01]) & \mu(E[yx,01]) & \mu(E[yy,01]) \\
		\mu(E[xx,10]) & \mu(E[xy,10]) & \mu(E[yx,10]) & \mu(E[yy,10]) \\
		\mu(E[xx,11]) & \mu(E[xy,11]) & \mu(E[yx,11]) & \mu(E[yy,11]) \\
	\end{array}
	\right),
  \label{eq:cl2bwo}
\end{equation}
where $\mu$ is a probability measure on $\Gamma$. It is easy to see that this map maps
pure classical states onto extreme classically correlated states \eref{eq:ccboxes} and
thus onto the set of all classically correlated states. But $\varphi$ is not injective.
As an example consider the following two probability measures:
\begin{eqnarray*}
  \mu_1 (\set{u}) &=& \left\{\begin{array}{ll}
    1/2 & \mathrm{if\ } u \in \set{(1, 0, 1, 1), (1, 1, 1, 0)}\\
    0   & \mathrm{otherwise}
  \end{array}\right. \\
  \mu_2 (\set{u}) &=& \left\{ \begin{array}{ll}
    1/4 & \mathrm{if\ } u \in \set{(1, 0, 1, 1), (1, 1, 1, 0), (1, 1, 0, 0), (1, 1,
    0, 1)}\\
    0   & \mathrm{otherwise}
  \end{array}\right.
\end{eqnarray*}
Image of both of them under the mapping \eref{eq:cl2bwo} equals
\begin{equation*}
  \varphi(\mu_1) = \varphi(\mu_2) = \left(\begin{array}{llll}
    0 & 0 & 0 & 0\\
    0 & 0 & 0 & 0\\
    \frac12 & \frac12 & \frac12 & \frac12 \\
    \frac12 & \frac12 & \frac12 & \frac12
  \end{array}\right)
\end{equation*}
Consequently this is not an affine isomorphism and the set of classically
correlated states is not ``classically shaped'', as the image of $\varphi$ is
not a simplex. This means that classically correlated states do not decompose
in a unique way into a convex combination of extremal states. As a result, the
measurement in the (2,2)-box world must be destructive, even when we restrict
PR-box states to classically correlated states. Let us consider this in more
detail.

Assume that we are given ``sources'' of different (2,2)-box world boxes and a
device that allows mixing them. Denote by
\begin{eqnarray*}
  \rho_1 &=& \left(\begin{array}{llll}
    0 & 0 & 0 & 0\\
    0 & 0 & 0 & 0\\
    1 & 1 & 1 & 1\\
    0 & 0 & 0 & 0
  \end{array}\right) \quad
  \rho_2 = \left(\begin{array}{llll}
    0 & 0 & 0 & 0\\
    0 & 0 & 0 & 0\\
    0 & 0 & 0 & 0\\
    1 & 1 & 1 & 1
  \end{array}\right)\\
  \rho_3 &=& \left(\begin{array}{llll}
    0 & 0 & 0 & 0\\
    0 & 0 & 0 & 0\\
    1 & 0 & 1 & 0\\
    0 & 1 & 0 & 1
  \end{array}\right) \quad
  \rho_4 = \left(\begin{array}{llll}
    0 & 0 & 0 & 0\\
    0 & 0 & 0 & 0\\
    0 & 1 & 0 & 1\\
    1 & 0 & 1 & 0
  \end{array}\right)
\end{eqnarray*}
Now we set up our mixing device to output $\rho_1$ and $\rho_2$ with
probability $1/2$. The resulting state encoding our uncertainty would be $\rho
= 1/2\rho_1 + 1/2\rho_2$. Now we ask $[xx,11]$ question. If the answer is
``yes'', then on the same boxes we ask $[xy,11]$. Because we mix states
$\rho_1$ and $\rho_2$, in classical world we should get answer ``yes''. But in
(2,2)-box world we can obtain the same state $\rho$ by mixing completely
different states: $1/2\rho_3+1/2\rho_4$. In the latter case, for the second
question we should always get answer ``no''. To overcome this ambiguity we must
assume that either each box can be measured only once and then is destroyed, or
that the box undergoes state transformation under measurement. But then we need
to \emph{postulate} how the state is changed (e.g.\ after the positive answer to
question $[xy,ab]$ it transforms to uniform mixture of all extreme states in
which $[xy,ab]$ is certain).

Finally, the correspondence described in \eref{eq:quest-phase} between questions in
the (2,2)-box world and subsets of $\Gamma$ is an order preserving isomorphism between the
logic of the (2,2)-box world and subsets of $\Gamma$. Thus $(\Gamma, E(\mathcal L))$ is a
concrete logic corresponding to $\mathcal L$. Although this corollary seems to be
obvious, we verified it by ``by Mathematica'' independently.

\subsection{Embedding the (2,2)-box world into orthomodular lattice}
\label{sec:embedding-box-world}

The map $\varphi$ defined in the previous section induces an embedding
\begin{equation*}
  \mathrm{\fbox{\parbox{10em}{classically correlated\\
  (2,2)-box world boxes}}} \hookrightarrow
  \mathrm{\fbox{\parbox{8em}{classical model\\
  \scriptsize(Boolean algebra on $\Gamma$)}}}
\end{equation*}
which is understood in the following sense: propositional system is embedded
into larger structure and states are extended to states on this larger
structure (the map is not injective, but we can take an arbitrary element in
the preimage of $\varphi$ for each classically correlated (2,2)-box world
state). Similar construction can be performed for ``quantumly correlated
states''. Using the analogy discussed in Sec.~\ref{sec:box-world-one} it is
clear that we can define a map $\psi$
\begin{equation*}
\fl    \psi(\sigma) =\left(
    \begin{array}{llll}
        \Tr(\sigma P' \otimes R')&
        \Tr(\sigma P' \otimes S')&
        \Tr(\sigma Q' \otimes R')&
        \Tr(\sigma Q' \otimes S')\\
        \Tr(\sigma P' \otimes R )&
        \Tr(\sigma P' \otimes S )&
        \Tr(\sigma Q' \otimes R )&
        \Tr(\sigma Q' \otimes S )\\
        \Tr(\sigma P  \otimes R')&
        \Tr(\sigma P  \otimes S')&
        \Tr(\sigma Q  \otimes R')&
        \Tr(\sigma Q  \otimes S')\\
        \Tr(\sigma P  \otimes R )&
        \Tr(\sigma P  \otimes S )&
        \Tr(\sigma Q  \otimes R )&
        \Tr(\sigma Q  \otimes S )\\
    \end{array}\right),
\end{equation*}
where $\sigma$ is a density matrix of the two-qubit system, and $P' = \bbone - P,$ etc.
Clearly $\psi$ maps quantum states into a subset of states of (2,2)-box world, which we
would call ``quantumly correlated'', in the same manner as $\varphi$ does for classical
states. Consequently we have also an embedding

\begin{equation*}
{\fbox{\parbox{10em}{quantumly correlated\\
  (2,2)-box world boxes}}} \hookrightarrow
  {\fbox{\parbox{8em}{quantum model\\
  \scriptsize(projection lattice)}}}
\end{equation*}

Now it is natural to ask if we can embed the whole (2,2)-box world structure in
some larger orthomodular lattice, i.e.\ we ask if the following embedding exists
\begin{equation*}
  \mathrm{\fbox{(2,2)-box world boxes}} \stackrel{?}{\hookrightarrow}
  \mathrm{\fbox{orthomodular lattice}}
\end{equation*}
Intuition suggests that the answer to this question is negative because
orthomodular lattice is a propositional system of quantum mechanics and due to
violation of Tsirelson bound this embedding should not be possible.
Nevertheless, Authors are not aware of any proof of Tsirelson bound that relies
solely on the structure of orthomodular lattice. Moreover, there are
orthomodular lattices which are not lattices of projections of some von Neumann
algebra, so in principle violation of Tsirelson bound for orthomodular lattice
is possible. As orthomodular lattices seem to posses nicer physical
interpretation than more general orthomodular posets (for details see
Ref.~\cite{piron1976foundations}), the question risen in this paragraph becomes
interesting.

\begin{thm}
  The logic of (2,2)-box world cannot be embedded into orthomodular lattice in a
  way that preserves all (2,2)-box world states.
\end{thm}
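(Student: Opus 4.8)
\emph{Proof proposal.} Suppose, for contradiction, that there is an orthomodular lattice $\mathbb M$ and an embedding $\iota\colon\mathcal L\to\mathbb M$ — an injective order-embedding preserving $0$, $\bbone$ and the orthocomplementation, and hence preserving those finite joins and meets that already exist in $\mathcal L$ — such that every PR-box state $\rho$, regarded as a state on $\mathcal L$, is of the form $\rho=s\circ\iota$ for some state $s$ on $\mathbb M$. Replacing $\mathbb M$ by the sub-orthomodular-lattice generated by $\iota(\mathcal L)$ costs nothing, since the restriction of a state is a state, so we may assume $\mathbb M$ is generated by $\iota(\mathcal L)$. Two facts about states on an orthomodular lattice will be used: monotonicity, $a\le b\Rightarrow s(a)\le s(b)$; and the orthomodular decomposition $s(b)=s(a)+s(b\wedge a\cmpl)$ for $a\le b$ (a special case of additivity on orthogonal pairs, as $a\perp b\wedge a\cmpl$ and $a\vee(b\wedge a\cmpl)=b$).

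The mechanism exploits one of the $32$ pairs with no join. Take the first of them, $p=[xx,00]$ and $q=[yy,00]$, whose minimal upper bounds in $\mathcal L$ are exactly $r_1=[xy,11]\cmpl$ and $r_2=[yx,11]\cmpl$. In $\mathbb M$ the join $c\defeq\iota(p)\vee\iota(q)$ exists, and since $\iota(r_1),\iota(r_2)$ are upper bounds of $\{\iota(p),\iota(q)\}$ we get $\iota(p),\iota(q)\le c\le\iota(r_1)\wedge\iota(r_2)$. A single such $c$ yields no contradiction: any state $s$ extending a PR-box state $\rho$ merely satisfies $\max(\rho(p),\rho(q))\le s(c)\le\min(\rho(r_1),\rho(r_2))$, which is consistent because $p,q\le r_1,r_2$ already holds in $\mathcal L$. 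The leverage comes from combining the orthomodular law with the \emph{computed} partial order of $\mathcal L$: put $d\defeq c\wedge\iota(p)\cmpl$, so $\iota(p)\perp d$ and $s(c)=\rho(p)+s(d)$; then $d\le\iota(r_1)\wedge\iota(p)\cmpl=\iota\bigl(r_1\wedge p\cmpl\bigr)$, where $r_1\wedge p\cmpl$ is a genuine element of $\mathcal L$ because the pair $\{[xy,11]\cmpl,[xx,00]\cmpl\}$ does \emph{not} belong to the $32$ pairs without meet, and likewise $d\le\iota(r_2\wedge p\cmpl)$; symmetrically $s(c)=\rho(q)+s(c\wedge\iota(q)\cmpl)$ with $c\wedge\iota(q)\cmpl\le\iota(r_i\wedge q\cmpl)$, these meets also existing in $\mathcal L$. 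Substituting into $s(c)\ge\max(\rho(p),\rho(q))$ produces a small system of linear inequalities among the numbers $\rho(p),\rho(q),\rho(r_i\wedge p\cmpl),\rho(r_i\wedge q\cmpl)$ that must hold for \emph{every} PR-box state. Since, e.g., $[yy,00]\not\le[xy,11]\cmpl\wedge[xx,00]\cmpl$ (because $P(00|yy)+P(00|xx)$ is not bounded by $1$), the resulting inequality $\rho(q)\le\rho(p)+\rho(r_1\wedge p\cmpl)$ is \emph{not} among the constraints (P1)--(P3), hence is not automatically valid.

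It then suffices to exhibit one PR-box state violating such a relation. One expects it to be a genuinely non-local box: by Section~\ref{sec:embedding-box-world} the classically correlated states extend to a Boolean algebra and the quantumly correlated ones to the orthomodular lattice of two qubits, so the obstruction must be carried by a non-local state — the Popescu--Rohrlich box itself, or a suitable extreme non-signalling vertex, is the natural candidate to plug in, exactly as in the explicit PR-box states used to prove that $\mathcal L$ is not a UCP space.

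The main obstacle is the middle step: determining \emph{which} of the $32$ problematic pairs (and which of the induced meets in the $82$-element poset) give a system that is genuinely inconsistent over the PR-box polytope — not every choice will. This is precisely the sort of finite combinatorial / linear-programming verification done elsewhere in the paper: read the required meets off the computed partial order (Table~\ref{tab:L}), assemble the inequalities forced by the orthomodular law on the hypothetical joins, and check infeasibility against (P1)--(P3) by symbolic computation. A secondary point needing care is the precise notion of embedding, since the argument uses that $\iota$ preserves the meets $r_i\wedge p\cmpl$ and $r_i\wedge q\cmpl$; one either adopts the natural convention that an embedding of a logic into a lattice preserves already-existing meets and joins, or tightens the bound on $d$ directly, by squeezing $d$ between $\iota$-images of two $\mathcal L$-elements of equal value in the chosen state.
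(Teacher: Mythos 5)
Your central mechanism does not produce any constraint beyond the partial order of $\mathcal L$ itself, so the violating PR-box state you hope to exhibit cannot exist. Concretely, take your $p=[xx,00]$, $r_1=[xy,11]\cmpl$. Since $p\le r_1$ in $\mathcal L$, orthomodularity of $\mathcal L$ (not of the ambient lattice) already gives $p\perp(r_1\wedge p\cmpl)$ and $p\oplus(r_1\wedge p\cmpl)=r_1$, hence \emph{every} PR-box state satisfies $\rho(p)+\rho(r_1\wedge p\cmpl)=\rho(r_1)$. Your inequality $\rho(q)\le\rho(p)+\rho(r_1\wedge p\cmpl)$ is therefore literally $\rho(q)\le\rho(r_1)$, i.e.\ the relation $q\le r_1$, which holds by the very fact that $r_1$ is an upper bound of $q$; so, contrary to your parenthetical remark, it \emph{is} a consequence of (P1)--(P3). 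The same collapse occurs for each of the bounds $s(c)\le\rho(p)+\rho(r_i\wedge p\cmpl)$ and $s(c)\le\rho(q)+\rho(r_i\wedge q\cmpl)$: the whole ``system'' reduces to the sandwich $\max(\rho(p),\rho(q))\le s(c)\le\min(\rho(r_1),\rho(r_2))$, whose interval is nonempty for every state because $p,q\le r_1,r_2$. More generally, any argument that treats one state at a time is doomed here: for a single state the sandwich is always satisfiable (the paper itself records only $1/2\le p(q_1\vee q_2,\rho_0)\le 1$ for its nonlocal state $\rho_0$), so no finite list of per-state inequalities of your kind can be infeasible over the PR-box polytope. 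Deferring exactly this decisive step to an unperformed linear-programming check is where the proposal stops being a proof.

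The missing idea is that ``preserves all states'' must be exploited \emph{across} states, through convexity. The paper's proof uses $q_1=[xx,11]$, $q_2=[yy,11]$ with minimal upper bounds $r_1=[xy,00]\cmpl$, $r_2=[yx,00]\cmpl$, and only the sandwich bounds --- but it applies them to a mixed, classically correlated state with two different convex decompositions. For the nonlocal states $\rho_1,\rho_0$ the sandwich pins $p(q_1\vee q_2,\rho_1)=1/2$ and $p(q_1\vee q_2,\rho_0)\ge 1/2$; the state $\rho_{3/4}=\tfrac34\rho_1+\tfrac14\rho_0$ is also a uniform mixture of eight deterministic local boxes $\sigma_i$, on which the sandwich forces $p(q_1\vee q_2,\sigma_i)\in\set{0,1}$ with value $1$ exactly three times, giving $p(q_1\vee q_2,\rho_{3/4})=3/8$. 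Affinity of the assignment $\rho\mapsto p(q_1\vee q_2,\rho)$ (the natural reading of extending the whole state space) then forces $p(q_1\vee q_2,\rho_0)=0$, contradicting the bound $\ge 1/2$. Nothing in your argument plays the role of comparing two decompositions of one mixed state, and without such a cross-state step the obstruction you are after simply is not visible; incidentally, the meet-preservation issue you worry about never arises in the paper's route, since it only uses monotonicity and additivity on images of elements of $\mathcal L$.
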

\textbf{Proof}
  We will show this by contradiction. Let
  \begin{equation*}
    q_1 = [xx,11],\qquad q_2 = [yy,11].
  \end{equation*}
  Minimal upper bound of $q_1$ and $q_2$ consist of elements
  (one can use the graph of partial order to track this):
  \begin{equation*}
    r_1 = [xy,00]\cmpl,\qquad r_2 = [yx,00]\cmpl.
  \end{equation*}
  Assume that the unique element $q_1\vee q_2$ exists.
  By the definition, for any state $\rho$
  \begin{eqnarray}
    %\begin{aligned}
      \rho(q_1) &\leq &\rho(q_1\vee q_2) \leq \rho(r_1),\\
      \rho(q_2) &\leq &\rho(q_1\vee q_2) \leq \rho(r_2).
    %\end{aligned}
    \label{eq:uniqjoinrestr}
  \end{eqnarray}
  Consider the following states
  \begin{equation*}
    \rho_1 = \left(
    \begin{array}{cccc}
      \frac{1}{2} & \frac{1}{2} & \frac{1}{2} & 0 \\
      0 & 0 & 0 & \frac{1}{2} \\
      0 & 0 & 0 & \frac{1}{2} \\
      \frac{1}{2} & \frac{1}{2} & \frac{1}{2} & 0 \\
    \end{array}
    \right),
    \qquad \rho_0 = \left(
    \begin{array}{cccc}
      0 & 0 & 0 & \frac{1}{2} \\
      \frac{1}{2} & \frac{1}{2} & \frac{1}{2} & 0 \\
      \frac{1}{2} & \frac{1}{2} & \frac{1}{2} & 0 \\
      0 & 0 & 0 & \frac{1}{2} \\
    \end{array}
    \right),
  \end{equation*}
  and convex combination $\rho_\lambda = \lambda \rho_1 + (1-\lambda)
  \rho_0$. It follows that $\rho_1(q_1\vee q_2)=1/2$ and
  $1/2\leq \rho_0(q_1\vee q_2)\leq 1$.
  For $1/4\leq \lambda \leq 3/4$ the state $\rho_\lambda$ is a
  classically correlated state. For $\lambda=3/4$ it can be equivalently written
  as a convex combination of the following eight classically correlated states, each
  with weight equal to $1/8$:
  \begin{eqnarray*}
    \sigma_1 &=& \left(
    \begin{array}{cccc}
      0 & 0 & 0 & 0 \\
      0 & 0 & 0 & 0 \\
      0 & 0 & 0 & 0 \\
      1 & 1 & 1 & 1 \\
    \end{array}
    \right) \qquad \sigma_2 = \left(
    \begin{array}{cccc}
      0 & 0 & 0 & 0 \\
      0 & 0 & 0 & 0 \\
      0 & 1 & 0 & 1 \\
      1 & 0 & 1 & 0 \\
    \end{array}
    \right) \\
    \sigma_3 &=& \left(
    \begin{array}{cccc}
      0 & 0 & 0 & 0 \\
      0 & 0 & 1 & 1 \\
      0 & 0 & 0 & 0 \\
      1 & 1 & 0 & 0 \\
    \end{array}
    \right) \qquad \sigma_4 = \left(
    \begin{array}{cccc}
      1 & 0 & 1 & 0 \\
      0 & 1 & 0 & 1 \\
      0 & 0 & 0 & 0 \\
      0 & 0 & 0 & 0 \\
    \end{array}
    \right)\\
    \sigma_5 &=& \left(
    \begin{array}{cccc}
      1 & 1 & 0 & 0 \\
      0 & 0 & 0 & 0 \\
      0 & 0 & 1 & 1 \\
      0 & 0 & 0 & 0 \\
    \end{array}
    \right) \qquad \sigma_6 = \left(
    \begin{array}{cccc}
      0 & 1 & 0 & 0 \\
      1 & 0 & 0 & 0 \\
      0 & 0 & 0 & 1 \\
      0 & 0 & 1 & 0 \\
    \end{array}
    \right) \\
    \sigma_7 &=&\left(
    \begin{array}{cccc}
      1 & 1 & 1 & 1 \\
      0 & 0 & 0 & 0 \\
      0 & 0 & 0 & 0 \\
      0 & 0 & 0 & 0 \\
    \end{array}
    \right) \qquad \sigma_8 = \left(
    \begin{array}{cccc}
      0 & 0 & 1 & 0 \\
      0 & 0 & 0 & 1 \\
      1 & 0 & 0 & 0 \\
      0 & 1 & 0 & 0 \\
    \end{array}
    \right)
  \end{eqnarray*}
  We check that due to~\eref{eq:uniqjoinrestr} we have
  $\sigma_i(q_1\vee q_2) = 1$ for $i=1,2,3$ and
  $\sigma_i(q_1\vee q_2) = 0$ for the remaining 5 states.
  Thus $\rho_{3/4}(q_1\vee q_2) = 3/8$.
  On the other hand
  \begin{equation*}
    3/4 \rho_1(q_1\vee q_2) + 1/4 \rho_0(q_1\vee q_2) = 3/8,
  \end{equation*}
  what immediately implies that $\rho_0(q_1\vee q_2)=0$ in contradiction with
  the previously obtained bounds on the value of $\rho_0(q_1\vee q_2)$.
  Consequently it is not possible to define a unique join if we want to allow
  all (2,2)-box world states to extend to valid states on the larger structure.
$\blacksquare$

\section{Outlook}
\label{sec:conclusions}

The presented analysis shows how (2,2)-box world emerges. We start with the
classical logic over 16-element phase space. Then we carefully select 82
propositions in a way that will allow us to interpret resulting structure in
terms of a system composed of two subsystems. In particular, we need to
preserve compatibility between questions that we want to assign to different
components (in the sense that questions are compatible whenever they span a
Boolean algebra). As a result we obtain an orthomodular poset $\mathcal L$ (the
order is induced by the order of classical logic). Finally, we take all
possible probability measures as admissible states.
% We take the simplex of classical states over 16-element phase space.  select
% some of the elements of classical logic over take the simplex of classical
% states over a 16-element phase space. Then using the map $\phi$ from
% Sec.~\ref{sec:restr-class-case} we map them onto the smaller set of
% classically correlated states of (2,2)-box world.  This set of states induces
% certain orthomodular structure on the set of all possible questions $\mathcal
% L$, but these states are not all possible probability measures on $\mathcal
% L$. As a next step we enrich set of states with those probability measures on
% $\mathcal L$, that satisfy some plausible assumptions, specifically P1--P3. As
% a result we obtain an orthomodular poset with certain subset of all
% probability measures on it, rich enough to forbid embedding this structure
% into larger Boolean algebra or even orthomodular lattice.
In this way, due to the link between probability and logic, 
we obtain a generalized probability theory. 
One of the features of this theory is violation
of Tsirelson bound of the quantum probability theory.

This perspective ``hides'' the non-signaling condition (P3) in the appropriate
selection of 82 propositions from classical logic. This can possibly help to
define non-signaling systems that consist of more than two boxes, as now our
main concern is compatibility of certain questions: the notion which has clear
meaning in any orthomodular poset.

Moreover, presented link between non-signaling theories and quantum logics can
shed new light on the problem of describing composite systems in the language
of quantum logics (as far as Authors know, there is no unique or canonical way
to build the logic of composite systems from logics of components,
cf.~Refs.~\cite{ptak1991orthomodular,dvurecenskij2000new}). The first step in
this direction would be identification of how the (2,2)-box world logic arise
from very simple logics of separate boxes.

Authors are not aware of any prior results related to the violation of
Tsirelson bound in the framework of quantum logics. The presented analysis
suggests that other examples of quantum logics studied in the literature can
exhibit violation of Tsirelson bound. These could provide new and interesting
models for quantum information theory. It is also interesting to examine how
compliance with Tsireslon bound for an orthomodular lattice is related to the
property of being a projection lattice of some von Neumann algebra.

\ack

Work of TT was supported by University of Gda\'nsk, grant BMN-538-5400-B168-13 and
within the International PhD Project Physics of future quantum-based
information technologies, grant MPD/2009-3/4 from Foundation for Polish
Science. Part of this publication was made possible through the support of a
grant from the John Templeton Foundation.

Authors are grateful to Micha{\l}‚ Horodecki, W{\l}adys{\l}aw Adam Majewski,
Silvia Pulmannov\'a and Jaros{\l}aw Pykacz for valuable
discussions.

%\label{sec:acknowledgement}

% section acknowledgement (end)

%\appendix
\section*{References}

\bibliographystyle{unsrt}
\bibliography{boxesandlogics}

%\newpage

\bgroup
\begin{table}[h]
\tiny
\begin{tabular}{|l|p{5cm}||l|p{5cm}|}
    \hline
    $q$ & questions that cover $q$ &
    $q$ & questions that cover $q$ \\
    \hline\hline
 % $0$ & $\mathrm{[xx,00]},\mathrm{[xx,01]},\mathrm{[xx,10]},\mathrm{[xx,11]}, \mathrm{[xy,00]},$\newline
 %             $\mathrm{[xy,01]},\mathrm{[xy,10]},\mathrm{[xy,11]},\mathrm{[yx,00]},\mathrm{[yx,01]},$\newline
 %             $\mathrm{[yx,10]},\mathrm{[yx,11]},\mathrm{[yy,00]},\mathrm{[yy,01]},\mathrm{[yy,10]},$\newline
 %             $\mathrm{[yy,11]}$ &
 $0$ & \multicolumn{3}{|r|}{$\mathrm{[xx,00]},\mathrm{[xx,01]},\mathrm{[xx,10]},\mathrm{[xx,11]}, \mathrm{[xy,00]},$
             $\mathrm{[xy,01]},\mathrm{[xy,10]},\mathrm{[xy,11]},\mathrm{[yx,00]},\mathrm{[yx,01]},$
             $\mathrm{[yx,10]},\mathrm{[yx,11]},\mathrm{[yy,00]},\mathrm{[yy,01]},\mathrm{[yy,10]},$
             $\mathrm{[yy,11]}$} \\
 % $1$ & $\emptyset$ \\
 \hline
 $\mathrm{[xx,00]}$ & $\mathrm{[xx,00]}\oplus \mathrm{[xx,01]},\mathrm{[xx,00]}\oplus \mathrm{[xx,10]},$\newline
                    $\mathrm{[xx,00]}\oplus \mathrm{[xx,11]},\mathrm{[xx,00]}\oplus \mathrm{[xy,10]},$\newline
                    $\mathrm{[xx,00]}\oplus \mathrm{[xy,11]},\mathrm{[xx,00]}\oplus \mathrm{[yx,01]},$\newline
                    $\mathrm{[xx,00]}\oplus \mathrm{[yx,11]}$
                    &
 $\mathrm{[xx,01]}$ & $\mathrm{[xx,00]}\oplus \mathrm{[xx,01]},\mathrm{[xx,01]}\oplus \mathrm{[xx,10]},$\newline
                    $\mathrm{[xx,01]}\oplus \mathrm{[xx,11]},\mathrm{[xx,01]}\oplus \mathrm{[xy,10]},$\newline
                    $\mathrm{[xx,01]}\oplus \mathrm{[xy,11]},\mathrm{[xx,01]}\oplus \mathrm{[yx,00]},$\newline
                    $\mathrm{[xx,01]}\oplus \mathrm{[yx,10]}$ \\
 \hline
 $\mathrm{[xx,10]}$ & $\mathrm{[xx,00]}\oplus \mathrm{[xx,10]},\mathrm{[xx,01]}\oplus \mathrm{[xx,10]},$\newline
                    $\mathrm{[xx,10]}\oplus \mathrm{[xx,11]},\mathrm{[xx,10]}\oplus \mathrm{[xy,00]},$\newline
                    $\mathrm{[xx,10]}\oplus \mathrm{[xy,01]},\mathrm{[xx,10]}\oplus \mathrm{[yx,01]},$\newline
                    $\mathrm{[xx,10]}\oplus \mathrm{[yx,11]}$ &
 $\mathrm{[xx,11]}$ & $\mathrm{[xx,00]}\oplus \mathrm{[xx,11]},\mathrm{[xx,01]}\oplus \mathrm{[xx,11]},$\newline
                    $\mathrm{[xx,10]}\oplus \mathrm{[xx,11]},\mathrm{[xx,11]}\oplus \mathrm{[xy,00]},$\newline
                    $\mathrm{[xx,11]}\oplus \mathrm{[xy,01]},\mathrm{[xx,11]}\oplus \mathrm{[yx,00]},$\newline
                    $\mathrm{[xx,11]}\oplus \mathrm{[yx,10]}$ \\
 \hline
 $\mathrm{[xy,00]}$ & $\mathrm{[xx,00]}\oplus \mathrm{[xx,01]},\mathrm{[xx,10]}\oplus \mathrm{[xy,00]},$\newline
                    $\mathrm{[xx,11]}\oplus \mathrm{[xy,00]},\mathrm{[xy,00]}\oplus \mathrm{[xy,10]},$\newline
                    $\mathrm{[xy,00]}\oplus \mathrm{[xy,11]},\mathrm{[xy,00]}\oplus \mathrm{[yy,01]},$\newline
                    $\mathrm{[xy,00]}\oplus \mathrm{[yy,11]}$ &
 $\mathrm{[xy,01]}$ & $\mathrm{[xx,00]}\oplus \mathrm{[xx,01]},\mathrm{[xx,10]}\oplus \mathrm{[xy,01]},$\newline
                    $\mathrm{[xx,11]}\oplus \mathrm{[xy,01]},\mathrm{[xy,01]}\oplus \mathrm{[xy,10]},$\newline
                    $\mathrm{[xy,01]}\oplus \mathrm{[xy,11]},\mathrm{[xy,01]}\oplus \mathrm{[yy,00]},$\newline
                    $\mathrm{[xy,01]}\oplus \mathrm{[yy,10]}$ \\
 \hline
 $\mathrm{[xy,10]}$ & $\mathrm{[xx,00]}\oplus \mathrm{[xy,10]},\mathrm{[xx,01]}\oplus \mathrm{[xy,10]},$\newline
                    $\mathrm{[xx,10]}\oplus \mathrm{[xx,11]},\mathrm{[xy,00]}\oplus \mathrm{[xy,10]},$\newline
                    $\mathrm{[xy,01]}\oplus \mathrm{[xy,10]},\mathrm{[xy,10]}\oplus \mathrm{[yy,01]},$\newline
                    $\mathrm{[xy,10]}\oplus \mathrm{[yy,11]}$ &
 $\mathrm{[xy,11]}$ & $\mathrm{[xx,00]}\oplus \mathrm{[xy,11]},\mathrm{[xx,01]}\oplus \mathrm{[xy,11]},$\newline
                    $\mathrm{[xx,10]}\oplus \mathrm{[xx,11]},\mathrm{[xy,00]}\oplus \mathrm{[xy,11]},$\newline
                    $\mathrm{[xy,01]}\oplus \mathrm{[xy,11]},\mathrm{[xy,11]}\oplus \mathrm{[yy,00]},$\newline
                    $\mathrm{[xy,11]}\oplus \mathrm{[yy,10]}$ \\
 \hline
 $\mathrm{[yx,00]}$ & $\mathrm{[xx,00]}\oplus \mathrm{[xx,10]},\mathrm{[xx,01]}\oplus \mathrm{[yx,00]},$\newline
                    $\mathrm{[xx,11]}\oplus \mathrm{[yx,00]},\mathrm{[yx,00]}\oplus \mathrm{[yx,01]},$\newline
                    $\mathrm{[yx,00]}\oplus \mathrm{[yx,11]},\mathrm{[yx,00]}\oplus \mathrm{[yy,10]},$\newline
                    $\mathrm{[yx,00]}\oplus \mathrm{[yy,11]}$ &
 $\mathrm{[yx,01]}$ & $\mathrm{[xx,00]}\oplus \mathrm{[yx,01]},\mathrm{[xx,01]}\oplus \mathrm{[xx,11]},$\newline
                    $\mathrm{[xx,10]}\oplus \mathrm{[yx,01]},\mathrm{[yx,00]}\oplus \mathrm{[yx,01]},$\newline
                    $\mathrm{[yx,01]}\oplus \mathrm{[yx,10]},\mathrm{[yx,01]}\oplus \mathrm{[yy,10]},$\newline
                    $\mathrm{[yx,01]}\oplus \mathrm{[yy,11]}$ \\
 \hline
 $\mathrm{[yx,10]}$ & $\mathrm{[xx,00]}\oplus \mathrm{[xx,10]},\mathrm{[xx,01]}\oplus \mathrm{[yx,10]},$\newline
                    $\mathrm{[xx,11]}\oplus \mathrm{[yx,10]},\mathrm{[yx,01]}\oplus \mathrm{[yx,10]},$\newline
                    $\mathrm{[yx,10]}\oplus \mathrm{[yx,11]},\mathrm{[yx,10]}\oplus \mathrm{[yy,00]},$\newline
                    $\mathrm{[yx,10]}\oplus \mathrm{[yy,01]}$ &
 $\mathrm{[yx,11]}$ & $\mathrm{[xx,00]}\oplus \mathrm{[yx,11]},\mathrm{[xx,01]}\oplus \mathrm{[xx,11]},$\newline
                    $\mathrm{[xx,10]}\oplus \mathrm{[yx,11]},\mathrm{[yx,00]}\oplus \mathrm{[yx,11]},$\newline
                    $\mathrm{[yx,10]}\oplus \mathrm{[yx,11]},\mathrm{[yx,11]}\oplus \mathrm{[yy,00]},$\newline
                    $\mathrm{[yx,11]}\oplus \mathrm{[yy,01]}$ \\
 \hline
 $\mathrm{[yy,00]}$ & $\mathrm{[xy,00]}\oplus \mathrm{[xy,10]},\mathrm{[xy,01]}\oplus \mathrm{[yy,00]},$\newline
                    $\mathrm{[xy,11]}\oplus \mathrm{[yy,00]},\mathrm{[yx,00]}\oplus \mathrm{[yx,01]},$\newline
                    $\mathrm{[yx,10]}\oplus \mathrm{[yy,00]},\mathrm{[yx,11]}\oplus \mathrm{[yy,00]},$\newline
                    $\mathrm{[yy,00]}\oplus \mathrm{[yy,11]}$ &
 $\mathrm{[yy,01]}$ & $\mathrm{[xy,00]}\oplus \mathrm{[yy,01]},\mathrm{[xy,01]}\oplus \mathrm{[xy,11]},$\newline
                    $\mathrm{[xy,10]}\oplus \mathrm{[yy,01]},\mathrm{[yx,00]}\oplus \mathrm{[yx,01]},$\newline
                    $\mathrm{[yx,10]}\oplus \mathrm{[yy,01]},\mathrm{[yx,11]}\oplus \mathrm{[yy,01]},$\newline
                    $\mathrm{[yy,01]}\oplus \mathrm{[yy,10]}$ \\
 \hline
 $\mathrm{[yy,10]}$ & $\mathrm{[xy,00]}\oplus \mathrm{[xy,10]},\mathrm{[xy,01]}\oplus \mathrm{[yy,10]},$\newline
                    $\mathrm{[xy,11]}\oplus \mathrm{[yy,10]},\mathrm{[yx,00]}\oplus \mathrm{[yy,10]},$\newline
                    $\mathrm{[yx,01]}\oplus \mathrm{[yy,10]},\mathrm{[yx,10]}\oplus \mathrm{[yx,11]},$\newline
                    $\mathrm{[yy,01]}\oplus \mathrm{[yy,10]}$ &
 $\mathrm{[yy,11]}$ & $\mathrm{[xy,00]}\oplus \mathrm{[yy,11]},\mathrm{[xy,01]}\oplus \mathrm{[xy,11]},$\newline
                    $\mathrm{[xy,10]}\oplus \mathrm{[yy,11]},\mathrm{[yx,00]}\oplus \mathrm{[yy,11]},$\newline
                    $\mathrm{[yx,01]}\oplus \mathrm{[yy,11]},\mathrm{[yx,10]}\oplus \mathrm{[yx,11]},$\newline
                    $\mathrm{[yy,00]}\oplus \mathrm{[yy,11]}$ \\
 \hline
 $\mathrm{[xx,00]}\oplus \mathrm{[xx,01]}$ & $\mathrm{[xx,10]}\cmpl,\mathrm{[xx,11]}\cmpl,
                                          \mathrm{[xy,10]}\cmpl,\mathrm{[xy,11]}\cmpl$ &
 $\mathrm{[xx,00]}\oplus \mathrm{[xx,10]}$ & $\mathrm{[xx,01]}\cmpl,\mathrm{[xx,11]}\cmpl,
                                          \mathrm{[yx,01]}\cmpl,\mathrm{[yx,11]}\cmpl$ \\
 \hline
 $\mathrm{[xx,01]}\oplus \mathrm{[xx,11]}$ & $\mathrm{[xx,00]}\cmpl,\mathrm{[xx,10]}\cmpl,
                                          \mathrm{[yx,00]}\cmpl,\mathrm{[yx,10]}\cmpl$ &
 $\mathrm{[xx,10]}\oplus \mathrm{[xx,11]}$ & $\mathrm{[xx,00]}\cmpl,\mathrm{[xx,01]}\cmpl,
                                          \mathrm{[xy,00]}\cmpl,\mathrm{[xy,01]}\cmpl$ \\
 \hline
 $\mathrm{[xy,00]}\oplus \mathrm{[xy,10]}$ & $\mathrm{[xy,01]}\cmpl,\mathrm{[xy,11]}\cmpl,
                                          \mathrm{[yy,01]}\cmpl,\mathrm{[yy,11]}\cmpl$ &
 $\mathrm{[xy,01]}\oplus \mathrm{[xy,11]}$ & $\mathrm{[xy,00]}\cmpl,\mathrm{[xy,10]}\cmpl,
                                          \mathrm{[yy,00]}\cmpl,\mathrm{[yy,10]}\cmpl$ \\
 \hline
 $\mathrm{[yx,00]}\oplus \mathrm{[yx,01]}$ & $\mathrm{[yx,10]}\cmpl,\mathrm{[yx,11]}\cmpl,
                                          \mathrm{[yy,10]}\cmpl,\mathrm{[yy,11]}\cmpl$ &
 $\mathrm{[yx,10]}\oplus \mathrm{[yx,11]}$ & $\mathrm{[yx,00]}\cmpl,\mathrm{[yx,01]}\cmpl,
                                          \mathrm{[yy,00]}\cmpl,\mathrm{[yy,01]}\cmpl$ \\
 \hline
 $\mathrm{[xx,00]}\oplus \mathrm{[xx,11]}$ & $\mathrm{[xx,01]}\cmpl,\mathrm{[xx,10]}\cmpl$ &
 $\mathrm{[xx,00]}\oplus \mathrm{[xy,10]}$ & $\mathrm{[xx,01]}\cmpl,\mathrm{[xy,11]}\cmpl$ \\
 \hline
 $\mathrm{[xx,00]}\oplus \mathrm{[xy,11]}$ & $\mathrm{[xx,01]}\cmpl,\mathrm{[xy,10]}\cmpl$ &
 $\mathrm{[xx,00]}\oplus \mathrm{[yx,01]}$ & $\mathrm{[xx,10]}\cmpl,\mathrm{[yx,11]}\cmpl$ \\
 \hline
 $\mathrm{[xx,00]}\oplus \mathrm{[yx,11]}$ & $\mathrm{[xx,10]}\cmpl,\mathrm{[yx,01]}\cmpl$ &
 $\mathrm{[xx,01]}\oplus \mathrm{[xx,10]}$ & $\mathrm{[xx,00]}\cmpl,\mathrm{[xx,11]}\cmpl$ \\
 \hline
 $\mathrm{[xx,01]}\oplus \mathrm{[xy,10]}$ & $\mathrm{[xx,00]}\cmpl,\mathrm{[xy,11]}\cmpl$ &
 $\mathrm{[xx,01]}\oplus \mathrm{[xy,11]}$ & $\mathrm{[xx,00]}\cmpl,\mathrm{[xy,10]}\cmpl$ \\
 \hline
 $\mathrm{[xx,01]}\oplus \mathrm{[yx,00]}$ & $\mathrm{[xx,11]}\cmpl,\mathrm{[yx,10]}\cmpl$ &
 $\mathrm{[xx,01]}\oplus \mathrm{[yx,10]}$ & $\mathrm{[xx,11]}\cmpl,\mathrm{[yx,00]}\cmpl$ \\
 \hline
 $\mathrm{[xx,10]}\oplus \mathrm{[xy,00]}$ & $\mathrm{[xx,11]}\cmpl,\mathrm{[xy,01]}\cmpl$ &
 $\mathrm{[xx,10]}\oplus \mathrm{[xy,01]}$ & $\mathrm{[xx,11]}\cmpl,\mathrm{[xy,00]}\cmpl$ \\
 \hline
 $\mathrm{[xx,10]}\oplus \mathrm{[yx,01]}$ & $\mathrm{[xx,00]}\cmpl,\mathrm{[yx,11]}\cmpl$ &
 $\mathrm{[xx,10]}\oplus \mathrm{[yx,11]}$ & $\mathrm{[xx,00]}\cmpl,\mathrm{[yx,01]}\cmpl$ \\
 \hline
 $\mathrm{[xx,11]}\oplus \mathrm{[xy,00]}$ & $\mathrm{[xx,10]}\cmpl,\mathrm{[xy,01]}\cmpl$ &
 $\mathrm{[xx,11]}\oplus \mathrm{[xy,01]}$ & $\mathrm{[xx,10]}\cmpl,\mathrm{[xy,00]}\cmpl$ \\
 \hline
 $\mathrm{[xx,11]}\oplus \mathrm{[yx,00]}$ & $\mathrm{[xx,01]}\cmpl,\mathrm{[yx,10]}\cmpl$ &
 $\mathrm{[xx,11]}\oplus \mathrm{[yx,10]}$ & $\mathrm{[xx,01]}\cmpl,\mathrm{[yx,00]}\cmpl$ \\
 \hline
 $\mathrm{[xy,00]}\oplus \mathrm{[xy,11]}$ & $\mathrm{[xy,01]}\cmpl,\mathrm{[xy,10]}\cmpl$ &
 $\mathrm{[xy,00]}\oplus \mathrm{[yy,01]}$ & $\mathrm{[xy,10]}\cmpl,\mathrm{[yy,11]}\cmpl$ \\
 \hline
 $\mathrm{[xy,00]}\oplus \mathrm{[yy,11]}$ & $\mathrm{[xy,10]}\cmpl,\mathrm{[yy,01]}\cmpl$ &
 $\mathrm{[xy,01]}\oplus \mathrm{[xy,10]}$ & $\mathrm{[xy,00]}\cmpl,\mathrm{[xy,11]}\cmpl$ \\
 \hline
 $\mathrm{[xy,01]}\oplus \mathrm{[yy,00]}$ & $\mathrm{[xy,11]}\cmpl,\mathrm{[yy,10]}\cmpl$ &
 $\mathrm{[xy,01]}\oplus \mathrm{[yy,10]}$ & $\mathrm{[xy,11]}\cmpl,\mathrm{[yy,00]}\cmpl$ \\
 \hline
 $\mathrm{[xy,10]}\oplus \mathrm{[yy,01]}$ & $\mathrm{[xy,00]}\cmpl,\mathrm{[yy,11]}\cmpl$ &
 $\mathrm{[xy,10]}\oplus \mathrm{[yy,11]}$ & $\mathrm{[xy,00]}\cmpl,\mathrm{[yy,01]}\cmpl$ \\
 \hline
 $\mathrm{[xy,11]}\oplus \mathrm{[yy,00]}$ & $\mathrm{[xy,01]}\cmpl,\mathrm{[yy,10]}\cmpl$ &
 $\mathrm{[xy,11]}\oplus \mathrm{[yy,10]}$ & $\mathrm{[xy,01]}\cmpl,\mathrm{[yy,00]}\cmpl$ \\
 \hline
 $\mathrm{[yx,00]}\oplus \mathrm{[yx,11]}$ & $\mathrm{[yx,01]}\cmpl,\mathrm{[yx,10]}\cmpl$ &
 $\mathrm{[yx,00]}\oplus \mathrm{[yy,10]}$ & $\mathrm{[yx,01]}\cmpl,\mathrm{[yy,11]}\cmpl$ \\
 \hline
 $\mathrm{[yx,00]}\oplus \mathrm{[yy,11]}$ & $\mathrm{[yx,01]}\cmpl,\mathrm{[yy,10]}\cmpl$ &
 $\mathrm{[yx,01]}\oplus \mathrm{[yx,10]}$ & $\mathrm{[yx,00]}\cmpl,\mathrm{[yx,11]}\cmpl$ \\
 \hline
 $\mathrm{[yx,01]}\oplus \mathrm{[yy,10]}$ & $\mathrm{[yx,00]}\cmpl,\mathrm{[yy,11]}\cmpl$ &
 $\mathrm{[yx,01]}\oplus \mathrm{[yy,11]}$ & $\mathrm{[yx,00]}\cmpl,\mathrm{[yy,10]}\cmpl$ \\
 \hline
 $\mathrm{[yx,10]}\oplus \mathrm{[yy,00]}$ & $\mathrm{[yx,11]}\cmpl,\mathrm{[yy,01]}\cmpl$ &
 $\mathrm{[yx,10]}\oplus \mathrm{[yy,01]}$ & $\mathrm{[yx,11]}\cmpl,\mathrm{[yy,00]}\cmpl$ \\
 \hline
 $\mathrm{[yx,11]}\oplus \mathrm{[yy,00]}$ & $\mathrm{[yx,10]}\cmpl,\mathrm{[yy,01]}\cmpl$ &
 $\mathrm{[yx,11]}\oplus \mathrm{[yy,01]}$ & $\mathrm{[yx,10]}\cmpl,\mathrm{[yy,00]}\cmpl$ \\
 \hline
 $\mathrm{[yy,00]}\oplus \mathrm{[yy,11]}$ & $\mathrm{[yy,01]}\cmpl,\mathrm{[yy,10]}\cmpl$ &
 $\mathrm{[yy,01]}\oplus \mathrm{[yy,10]}$ & $\mathrm{[yy,00]}\cmpl,\mathrm{[yy,11]}\cmpl$ \\
 \hline
any $\mathrm{[ab,\alpha, \beta]}\cmpl$ & $\bbone$ &
$\bbone$ & $\emptyset$ \\
 \hline\hline
\end{tabular}
    \caption{(2,2)-box world logic structure. Elements covering each question.
        Only single representant is given for each proposition.}
    \label{tab:L}
\pagestyle{empty}
\end{table}
\egroup

\end{document}